\documentclass[runningheads]{llncs}
\usepackage[T1]{fontenc}
\usepackage{graphicx}
\usepackage{algorithm}
\usepackage{algorithmic}
\usepackage{bm}
\usepackage{amsmath}
\usepackage{amsfonts}
\usepackage{amssymb}
\usepackage{multirow}
\usepackage{adjustbox}

\usepackage{marvosym}
\usepackage{color}
\usepackage{ulem}

\begin{document}
\title{A Note on Banaszczyk's Inequality}
%
%\titlerunning{Abbreviated paper title}
% If the paper title is too long for the running head, you can set
% an abbreviated paper title here
%
\author{Hongyuan Qu  \and Chengliang Tian \and  Guangwu Xu}

%\author{Hongyuan Qu\inst{1}\orcidID{0000-0002-1684-0819} \and  Guangwu Xu\inst{1,2,3}\orcidID{0000-0001-6200-3264}$^{(\text{\Letter})}$}

\institute{	}

%\institute{
%	School of Cyber Science and Technology, Shandong University,\\
%	Qingdao, Shandong, 266237, China.\\
%	\email{qhy1qhy@mail.sdu.edu.cn, gxu4sdq@sdu.edu.cn}}
%\author{First Author\inst{1}\orcidID{0000-1111-2222-3333} \and
%Second Author\inst{2,3}\orcidID{1111-2222-3333-4444} \and
%Third Author\inst{3}\orcidID{2222--3333-4444-5555}}
%
\authorrunning{F. Author et al.}
% First names are abbreviated in the running head.
% If there are more than two authors, 'et al.' is used.
%
%\institute{Princeton University, Princeton NJ 08544, USA \and
%Springer Heidelberg, Tiergartenstr. 17, 69121 Heidelberg, Germany
%\email{lncs@springer.com}\\
%\url{http://www.springer.com/gp/computer-science/lncs} \and
%ABC Institute, Rupert-Karls-University Heidelberg, Heidelberg, Germany\\
%\email{\{abc,lncs\}@uni-heidelberg.de}}
%
\maketitle              % typeset the header of the contribution
\begin{abstract}
Banaszczyk's inequality establishes a tail estimate for the discrete Gaussian measure on a lattice 
in $\mathbb{R}^n$.  This classic result has been influential and plays an important role in lattice-based cryptography. An improvement of the inequality with a transparent proof was given by Tian, Liu and Xu. In this note, 
we further improve this inequality by imposing an appropriate condition, obtaining a significantly better bound. This refined inequality can be used to investigate dual attacks against the Learning With Errors (LWE) problem.
%\keywords{LWE  \and Distinguishing Inequality \and Provable Dual Attack \and Lattice Sparsification \and Modulus Switching }
\end{abstract}
\pagestyle{plain}

\section{Introduction}\label{section_introduction}

Banaszczyk's inequality, introduced by Banaszczyk in 1993 \cite{banaszczyk1993new}, plays an important role in lattice-based cryptography. In essence, this inequality states that after removing the contribution of some short vectors from a  lattice, the remaining Gaussian mass is exponentially small relative  to the total Gaussian mass on the lattice. This inequality was originally used to prove the following 
transference theorem \cite{banaszczyk1993new}:
\begin{theorem}[Transference Theorem \cite{banaszczyk1993new}]
	Let $L\subset\mathbb{R}^n$ be an $n$-dimensional lattice, $L^*$ its dual, and $\lambda_i(L)$ the successive minima of $L$. Then
\[	
		\lambda_i(L)\lambda_{n-i+1}(L^*) \leq n,\ \ \ i = 1,\dots, n.
\]
\end{theorem}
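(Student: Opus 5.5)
We follow Banaszczyk's original strategy, with the Gaussian tail inequality as the main tool. Put $\rho_s(A)=\sum_{x\in A}e^{-\pi\|x\|^2/s^2}$. The tail bound we use says that for $c\ge 1/\sqrt{2\pi}$ and any $u\in\mathbb{R}^n$,
\[
\rho\bigl((L+u)\setminus c\sqrt{n}B\bigr) < \bigl(c\sqrt{2\pi e}\,e^{-\pi c^2}\bigr)^n\rho(L).
\]
We pair this with the Poisson summation formula $\rho(L+u)=\det(L^*)\sum_{y\in L^*}\rho(y)e^{2\pi i\langle y,u\rangle}$.

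First we derive two consequences. The first is $\rho(L+u)\le\rho(L)$ for every shift $u$, which is immediate from Poisson summation because every Fourier coefficient is nonnegative. The second is a "Fourier shift" bound: whenever $\rho(L^*\setminus\sqrt{n}B)$ is small, $\rho(L+u)$ is close to $\det(L^*)\,\rho\bigl(L^*\cap \sqrt{n}B\bigr)$-weighted averages.

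Next we argue by contradiction with a specific choice of index. Fix $i$ and suppose $\lambda_i(L)\lambda_{n-i+1}(L^*)>n$. Rescale so that $\lambda_i(L)>\sqrt{n}$ and $\lambda_{n-i+1}(L^*)>\sqrt{n}$, with a small slack $1+\epsilon$ on each. Let $V=\mathrm{span}$ of the vectors of $L^*$ of norm at most $\sqrt{n}$. Its dimension is at most $n-i$, so $\dim V^\perp\ge i$.

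The subspace $V^\perp$ has dimension at least $i$, while the lattice vectors of $L$ shorter than $\lambda_i(L)$ span a subspace $W$ of dimension at most $i-1$. Hence we can pick a unit direction, and from it a shift $u\in V^\perp$, whose component orthogonal to $W$ is as large as needed. More precisely, we take $u\in V^\perp\cap W^\perp$ with $\|u\|=\sqrt{n}/2$; this intersection is nonzero because $(n-\dim V)+(n-\dim W)\ge i+(n-i+1)>n$.

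We then compare two estimates for $\rho(L+u)/\rho(L)$.

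\emph{Lower bound via Poisson.} Every $y\in L^*\cap\sqrt{n}B$ lies in $V$, so $\langle y,u\rangle=0$ and its phase is $1$. Splitting the dual sum into $y\in\sqrt{n}B$ and the rest, and applying the tail bound with $c=1$ (shift $0$) to $L^*$, gives
\[
\rho(L+u)\ \ge\ \det(L^*)\bigl(\rho(L^*)-2\rho(L^*\setminus\sqrt nB)\bigr)\ \ge\ (1-2\delta)\rho(L),
\]
where $\delta=(\sqrt{2\pi e}e^{-\pi})^n$ is exponentially small.

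\emph{Upper bound via the primal side.} Apply the tail bound to $L+u$ to discard everything outside $\sqrt{n}B$, losing a factor $\delta$. Every point $x+u$ with $\|x+u\|\le\sqrt n$ must come from a short lattice vector $x$, but the choice of $u$ forces these points to be few or to have small mass. We intend to show $\rho((L+u)\cap\sqrt nB)\le$ a constant strictly less than $1-3\delta$, times $\rho(L)$.

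The two estimates then contradict each other. The main obstacle is the upper bound. The cleaner route, which is Banaszczyk's, works as follows. Use $x\in L$ with $\|x\|<\lambda_i$, so $x\in W\perp u$, which gives $\|x+u\|^2=\|x\|^2+\|u\|^2$. Lattice vectors with $\|x\|\ge\lambda_i>\sqrt n$ contribute at most $\delta\rho(L)$. Hence
\[
\rho(L+u)\le e^{-\pi\|u\|^2}\rho(L)+\delta\rho(L)+\text{(tail)}.
\]
This is exponentially small compared with $(1-2\delta)\rho(L)$ once $\|u\|\asymp\sqrt n$.

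If the orthogonality bookkeeping for points of $L+u$ with $x\notin W$ turns out messy, the fix is to bound those terms by the tail of the shifted coset, using a ball of radius $\sqrt n/2$ instead of $\sqrt{n}$ and adjusting the constant $c$. The constants should still close for $n\ge1$, perhaps after a limiting argument in $\epsilon$.
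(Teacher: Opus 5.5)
The paper gives no proof of this theorem; it is quoted from Banaszczyk as motivation. Your architecture is the standard Banaszczyk one, and most of it is sound: the rescaling, the choice of $u\in V^\perp\cap W^\perp$ by the dimension count, and the Poisson lower bound $\rho(L+u)\ge(1-2\delta)\rho(L)$ are all correct.

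The gap is the upper bound, which as written rests on two false steps.
\begin{itemize}
\item A point $x+u$ with $\|x+u\|\le\sqrt n$ need not come from a short $x\in W$. With $\|u\|=\sqrt n/2$ you only get $\|x\|\le\frac32\sqrt n$.
\item The tail bound controls $\sum_{x\in L,\ \|x\|\ge\lambda_i}\rho(x)$. These vectors, however, contribute $\sum\rho(x+u)$ to $\rho(L+u)$, and $\|x+u\|$ can be as small as $\|x\|-\|u\|$.
\end{itemize}
So the display $\rho(L+u)\le e^{-\pi\|u\|^2}\rho(L)+\delta\rho(L)+(\text{tail})$ is unjustified. The resulting bound is also not ``exponentially small'' in the sense of $e^{-\pi\|u\|^2}$.

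Your fallback is the right repair, but it has to be carried out, and it changes the numerics. Keep $\|u\|=\sqrt n/2$. Then every point of $L+u$ has norm at least $\sqrt n/2$:
\begin{itemize}
\item for $x\in W$, because $\|x+u\|^2=\|x\|^2+\|u\|^2$;
\item for $x\notin W$, because $\|x\|>\sqrt n$ and the triangle inequality gives $\|x+u\|>\sqrt n/2$.
\end{itemize}
Hence all of $\rho(L+u)$ is a coset tail at radius $\frac12\sqrt n$. That is $c=1/2$ in your normalization, or $s=1$, $c=\sqrt{\pi/2}$ in Lemma~\ref{lemma2.2}. This gives $\rho(L+u)\le\bigl(\tfrac12\sqrt{2\pi e}\,e^{-\pi/4}\bigr)^n\rho(L)<0.943^n\rho(L)$, with no need to split into $W$ and its complement.

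Compared with $(1-2\delta)\rho(L)$, where $\delta<0.179^n$, this gives a contradiction exactly for $n\ge 2$. Contrary to your last sentence, it does not close for $n=1$, and no limiting argument in the slack changes that. The case $n=1$ is trivial anyway, since $\lambda_1(L)\lambda_1(L^*)=1$.

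Note also that this step relies on the factor-$2$-free coset bound of Lemma~\ref{lemma2.2}. With Banaszczyk's original factor $2$, the needed inequality $2\cdot 0.943^n<1-2\delta$ holds only from $n=12$ on, so smaller dimensions would need separate treatment.
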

Later, Aharonov and Regev used Banaszczyk's inequality to prove that certain lattice problems belong to $NP \cap coNP$ \cite{aharonov2005lattice}. Their work introduced the idea of using Banaszczyk's inequality for distinguishing distributions. Recently, Pouly and Shen proposed a framework of provable dual attacks on LWE \cite{pouly2024provable}, which was further improved by Qu and Xu \cite{provablelwe}. The core of these provable dual attacks is the use of Banaszczyk's inequality to distinguish between correct and incorrect guesses.

In 2014, Tian, Liu, and Xu  gave a proof of an improved version of Banaszczyk's inequality  \cite{tian2014measure} with a more transparent argument. In this note, we restate the proof of this improvement to make it even more accessible. Furthermore, by imposing an additional natural condition on the lattice, we obtain a further improvement of Banaszczyk's inequality by an exponential factor.

\section{Preliminaries}\label{section_preliminaries}
In this section, we present some necessary knowledge for understanding Banaszczyk's inequality.

For vectors $\bm{x},\bm{y}\in\mathbb{R}^n$, their inner product is written as $\langle \bm{x},\bm{y} \rangle$, and the Euclidean norm of $\bm{x}$ is $\|\bm{x}\|$.  The open ball of radius $r>0$ centered at $\bm{x}$ is defined as $\mathcal{B}(\bm{x},r) = \{ \bm{y}\in\mathbb{R}^n : \|\bm{y}-\bm{x}\| < r \}$.

A lattice in $\mathbb{R}^n$ is a discrete subgroup of the additive group $(\mathbb{R}^n,+)$. 
Discreteness means there exists $\delta>0$ such that $\mathcal{B}(\bm{0},\delta)\cap L = \{ \bm{0} \}$. Every lattice $L\subset\mathbb{R}^n$ has a basis: linearly independent vectors $\bm{b}_1,\dots,\bm{b}_r\in L$ with
\[
L = \{ u_1\bm{b}_1 + \cdots + u_r\bm{b}_r \mid u_1,\dots,u_r \in \mathbb{Z}  \}.
\]
The integer $r$ is the rank of $L$ and $n$ is its dimension. $L$ is called full-rank when $r = n$.  The successive minima of $L$ are
\[
\lambda_i(L) = \min\left\{ t > 0 \mid \dim\left(\mathrm{span}\left\{ \mathbf{v} \in L \mid \|\mathbf{v}\| \leq t \right\}\right) \geq i \right\}, \ \ i = 1,\dots, r.  
\]
The dual lattice of $L$ is
\[
L^* = \{ \bm{u}\in \operatorname{span}(L) : \langle \bm{u},\bm{v} \rangle \in \mathbb{Z} \text{ for all } \bm{v}\in L \},
\]
and one has $(L^*)^* = L$. For $\bm{x}\in\mathbb{R}^n$, its distance to $L$ is
\[
\operatorname{dist}(\bm{x},L) = \min_{\bm{v}\in L} \| \bm{x}-\bm{v}\|.
\]

The Fourier transform of  a rapidly decreasing smooth function\footnote{This means that $h$ and all its (partial) derivatives $D^{\beta}h$ are
	rapidly decreasing in the sense that $\sup_{\bm{x}\in \mathbb{R}^n} |\bm{x}^{\alpha}D^{\beta}h(\bm{x})|<\infty$ for every $\alpha, \beta \in \mathbb{N}^n$. Such a function is said to be in the Schwartz space\cite{stein2011fourier}.} $h:\mathbb{R}^n\rightarrow \mathbb{R}$ is 
\[
\hat{h}(\bm{w}) = \int_{\mathbb{R}^n} h(\bm{x}) e^{-2\pi i\langle \bm{x},\bm{w} \rangle}\,d\bm{x},\qquad \bm{w}\in\mathbb{R}^n.
\]

For $s>0$ and $\bm{x}\in\mathbb{R}^n$, define the Gaussian function $\rho_s(\bm{x}) = e^{-\pi\|\bm{x}\|^2/s^2}$. Its Fourier transform is $\hat{\rho}_s(\bm{y}) = s^n e^{-\pi s^2 \|\bm{y}\|^2} = s^n \rho_{1/s}(\bm{y})$.  
For a discrete set $S\subset\mathbb{R}^n$, set $\rho_s(S) = \sum_{\bm{v}\in S}\rho_s(\bm{v})$.

A key tool is the Poisson summation formula, we state it just for the function $\rho_s(\bm{x})$:
\begin{lemma}[Poisson summation formula \cite{serre2012course}]\label{poisson_summation_formula}
	For an $n$-dimensional lattice $L$, $s>0$, and $\bm{t}\in\mathbb{R}^n$, we have
	\begin{enumerate}
		\item $\rho_s(L) = \dfrac{s^n}{\det(L)}\,\rho_{1/s}(L^*)$.
		\item $\rho_s(L+\bm{t}) = \dfrac{s^n}{\det(L)}\displaystyle\sum_{\bm{w}\in L^*} e^{2\pi i\langle \bm{t},\bm{w} \rangle}\,\rho_{1/s}(\bm{w})$.
	\end{enumerate}
\end{lemma}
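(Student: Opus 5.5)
We prove the two parts of the Poisson summation formula (Lemma~\ref{poisson_summation_formula}) by periodizing the Gaussian over the lattice and expanding the resulting periodic function in a Fourier series indexed by $L^*$. Part 1 is then the special case $\bm{t}=\bm{0}$ of part 2.

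Fix a basis $B=(\bm{b}_1,\dots,\bm{b}_n)$ of $L$, so $L=B\mathbb{Z}^n$, $\det(L)=|\det B|$ and $L^*=B^{-T}\mathbb{Z}^n$. Define the periodization
\[
F(\bm{t})=\sum_{\bm{v}\in L}\rho_s(\bm{v}+\bm{t}),
\]
so that $F(\bm{t})=\rho_s(L+\bm{t})$. Because $\rho_s$ is a Schwartz function, this series and all its termwise derivatives converge absolutely and uniformly on compact sets. Hence $F$ is smooth and $L$-periodic.

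The first step is to write $F$ as a Fourier series over $L^*$. Let $\mathcal{P}=B[0,1)^n$ be the fundamental parallelepiped, whose volume is $\det(L)$. For $\bm{w}\in L^*$, the Fourier coefficient is
\[
c_{\bm{w}}=\frac{1}{\det(L)}\int_{\mathcal{P}}F(\bm{t})e^{-2\pi i\langle \bm{t},\bm{w}\rangle}\,d\bm{t}.
\]
By uniform convergence we may swap the sum and the integral. The translates $\mathcal{P}+\bm{v}$, $\bm{v}\in L$, tile $\mathbb{R}^n$, and $e^{-2\pi i\langle \bm{v},\bm{w}\rangle}=1$ for $\bm{v}\in L$, $\bm{w}\in L^*$. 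The sum of integrals therefore collapses to a single integral over $\mathbb{R}^n$:
\[
c_{\bm{w}}=\frac{1}{\det(L)}\,\hat{\rho}_s(\bm{w})=\frac{s^n}{\det(L)}\,\rho_{1/s}(\bm{w}),
\]
using the stated transform $\hat{\rho}_s=s^n\rho_{1/s}$.

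The second step, and the main obstacle, is to justify that $F$ equals its Fourier series pointwise, i.e.
\[
F(\bm{t})=\sum_{\bm{w}\in L^*}c_{\bm{w}}e^{2\pi i\langle \bm{t},\bm{w}\rangle}.
\]
We handle it in three moves:
\begin{enumerate}
\item Make the change of variables $\bm{t}=B\bm{u}$, which turns $F$ into a smooth $\mathbb{Z}^n$-periodic function $G(\bm{u})=F(B\bm{u})$. Its standard Fourier coefficients over $\mathbb{Z}^n$ are exactly the $c_{B^{-T}\bm{k}}$, $\bm{k}\in\mathbb{Z}^n$.
\item Observe that $c_{\bm{w}}$ decays like a Gaussian in $\|\bm{w}\|$, so $\sum_{\bm{w}\in L^*}|c_{\bm{w}}|<\infty$. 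Alternatively, since $G$ is smooth, integration by parts gives $|c_{\bm{w}}|=O(\|\bm{w}\|^{-N})$ for every $N$.
\item With absolutely summable coefficients, the Fourier series converges uniformly to a continuous function with the same coefficients as $G$. By uniqueness of Fourier coefficients for continuous functions on the torus (completeness of the exponentials in $L^2$, as in \cite{stein2011fourier}), this limit equals $G$ everywhere.
\end{enumerate}

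Substituting the coefficients computed in the first step gives part 2:
\[
\rho_s(L+\bm{t})=\frac{s^n}{\det(L)}\sum_{\bm{w}\in L^*}e^{2\pi i\langle \bm{t},\bm{w}\rangle}\rho_{1/s}(\bm{w}).
\]
Setting $\bm{t}=\bm{0}$ gives part 1, $\rho_s(L)=\frac{s^n}{\det(L)}\rho_{1/s}(L^*)$.
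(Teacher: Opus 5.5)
Your proof is correct. It is the standard periodization and Fourier-series argument from Serre's book, which the paper cites for this lemma without giving a proof of its own; evaluating the Fourier series at $\bm{t}$ rather than at $\bm{0}$ gives part 2 directly, and part 1 is the special case $\bm{t}=\bm{0}$. Fixing a basis of $n$ vectors assumes $L$ is full rank, which is the reading the statement needs anyway: for a lower-rank lattice the factor $s^n$ would be wrong, and so would part 2 when $\bm{t}\notin\operatorname{span}(L)$.
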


\section{Banaszczyk's Inequality: Proof and Improvement}\label{section3}
In this section, we begin with the classic measure inequality of 
Banaszczyk \cite[Lemma 1.5]{banaszczyk1993new}.  The following statement is taken from \cite{tian2014measure} (Theorem 3.1 and its remark), which improves Banaszczyk's inequality by removing the extra factor $2$ that appears in the estimate for cosets (i.e., when $\bm{t}\neq \bm{0}$).
Here we also present a concise proof which makes the argument more accessible.
\begin{lemma}[Banaszczyk's inequality, refined version \cite{banaszczyk1993new,tian2014measure}]\label{lemma2.2}
	Let $L\subset\mathbb{R}^n$ be a lattice of dimension $n$ and $s>0$. Then for any $c\geq 1$ and $\bm{t}\in\mathbb{R}^n$,
	\[
			 \frac{\rho_s\bigg((L+\bm{t}) \setminus \mathcal{B}\bigg(0,cs\sqrt{\frac{n}{2\pi}}\bigg)\bigg)}{\rho_s(L)} \leq \left( c\sqrt{e}\cdot e^{-\frac{c^2}{2}} \right)^n.
	\]
\end{lemma}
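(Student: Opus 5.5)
The plan is the standard exponential-tilt (Chernoff-type) argument combined with the Poisson summation formula (Lemma~\ref{poisson_summation_formula}). Fix a parameter $a\in(0,1]$, to be chosen at the end, and compare $\rho_s$ with the wider Gaussian $\rho_{s/\sqrt{a}}$. Put $r = cs\sqrt{n/(2\pi)}$. For every $\bm{x}$ with $\|\bm{x}\|\ge r$ we have
\[
\rho_s(\bm{x}) = e^{-\pi(1-a)\|\bm{x}\|^2/s^2}\,\rho_{s/\sqrt{a}}(\bm{x}) \le e^{-\pi(1-a) r^2/s^2}\,\rho_{s/\sqrt{a}}(\bm{x}) = e^{-(1-a)c^2 n/2}\,\rho_{s/\sqrt{a}}(\bm{x}).
\]
Summing over the points of $(L+\bm{t})$ outside the ball gives
\[
\rho_s\bigl((L+\bm{t})\setminus\mathcal{B}(0,r)\bigr)\le e^{-(1-a)c^2n/2}\,\rho_{s/\sqrt{a}}(L+\bm{t}).
\]

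The second step bounds the coset sum by the lattice sum without losing a factor $2$. This is where the refinement over Banaszczyk's original appears. Apply part 2 of Lemma~\ref{poisson_summation_formula} with parameter $s/\sqrt a$ and take absolute values, using $|e^{2\pi i\langle \bm t,\bm w\rangle}|=1$:
\[
\rho_{s/\sqrt a}(L+\bm t)\le \frac{(s/\sqrt a)^n}{\det L}\,\rho_{\sqrt a/s}(L^*).
\]
Since $a\le 1$, each term satisfies $\rho_{\sqrt a/s}(\bm w)\le \rho_{1/s}(\bm w)$. Hence
\[
\rho_{s/\sqrt a}(L+\bm t)\le a^{-n/2}\,\frac{s^n}{\det L}\,\rho_{1/s}(L^*) = a^{-n/2}\rho_s(L),
\]
where the last equality is part 1 of Lemma~\ref{poisson_summation_formula}.

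Combining the two steps yields
\[
\frac{\rho_s\bigl((L+\bm t)\setminus\mathcal B(0,r)\bigr)}{\rho_s(L)}\le \left(a^{-1/2} e^{-(1-a)c^2/2}\right)^n
\]
for every $a\in(0,1]$. It remains to optimize over $a$. The exponent $-\tfrac12\log a-\tfrac{(1-a)c^2}{2}$ is minimized at $a=1/c^2$, which lies in $(0,1]$ exactly because $c\ge1$. Substituting this choice gives $c\,e^{-(c^2-1)/2}=c\sqrt e\,e^{-c^2/2}$, which is the claimed bound.

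The only delicate point is the coset comparison in the second step. One must use the Fourier side, where the phases are bounded by $1$ in absolute value and all Gaussian terms are positive, together with the monotonicity of $\rho$ in its parameter. A direct comparison in the primal space would produce Banaszczyk's extra factor. Everything else is routine.
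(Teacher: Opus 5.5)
Your proposal is correct and is essentially the paper's proof. The paper fixes your tilt parameter at $a=1/c^2$ from the outset. It compares $\rho_s$ with $\rho_{sc}$ on the points outside the ball, then applies Poisson summation to the coset, bounds the phases by $1$, uses $\rho_{1/(sc)}\le\rho_{1/s}$, and applies Poisson summation once more. Your free parameter $a$ and the final optimization only explain why $sc$ is the right choice.
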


\begin{proof}
	\begin{align*}
		\rho_s\bigg((L+&\bm{t}) \setminus \mathcal{B}\bigg(0,cs\sqrt{\frac{n}{2\pi}}\bigg)\bigg)  =\sum_{{\stackrel{\bm{u}\in L+\bm{t}}{\|\bm{u}\|\geq cs\sqrt{\frac{n}{2\pi}}}}}e^{-\frac{\pi}{s^2}\|\bm{u}\|^2}\\
		&=\sum_{\stackrel{\bm{u}\in L+\bm{t}}{\|\bm{u}\|\geq cs\sqrt{\frac{n}{2\pi}}}}e^{-\frac{\pi}{s^2}\left( 1 - \frac{1}{c^2} \right)\|\bm{u}\|^2}\cdot e^{-\frac{\pi}{s^2c^2}\|\bm{u}\|^2}\leq e^{-\frac{\pi}{s^2}\left( 1 - \frac{1}{c^2} \right)\frac{c^2s^2n}{2\pi}}\rho_{sc}(L+\bm{t})\\
		&
		= e^{-\frac{ c^2 - 1 }{2}n}\frac{s^nc^n}{\det(L)}\sum_{\bm{w}\in L^{\ast}}e^{2\pi i\langle \bm{t},\bm{w} \rangle}\rho_{\frac{1}{sc}}(\bm{w})
		\leq e^{-\frac{ c^2 - 1 }{2}n}\frac{s^nc^n}{\det(L)}\sum_{\bm{w}\in L^{\ast}}\rho_{\frac{1}{s}}(\bm{w})\\
		&= e^{-\frac{ c^2 - 1 }{2}n}c^n \rho_s(L).
	\end{align*}
	The result follows since $\rho_s(L) > 0$.
	 \qed
\end{proof}
Note that the Poisson summation formula is used twice in the preceding proof.

As an application of Banaszczyk's inequality, we introduce the following distinguishing inequality, which plays a crucial role in the formal analysis and parameter selection of provable dual attacks \cite{provablelwe}.
\begin{corollary}[Distinguishing inequality{\cite[Corollary 1]{provablelwe}}]\label{corollary_more_tight_bound}
	Let $L\subset \mathbb{R}^n$ be an $n$-dimensional lattice and $\bm{x}\in \mathbb{R}^n$. Let $\tau = {s}\sqrt{\frac{n}{2\pi}}$ and $r_{\bm{x}} = \text{dist}(\bm{x},L)$. 
	If $r_{\bm{x}} \geq \tau$, then
	\begin{equation*}
		\rho_{{s}}(r_{\bm{x}}) \leq \frac{\rho_s(L+\bm{x})}{\rho_s(\bm{x})} \leq \left( \frac{r_{\bm{x}}}{\tau} \right)^n e^{\frac{n}{2}}\rho_{{s}}(r_{\bm{x}}).
	\end{equation*}
	Moreover, the right-hand side is a decreasing function of $r_{\bm{x}}$.
\end{corollary}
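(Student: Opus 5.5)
The plan is to handle the two inequalities separately, with Lemma~\ref{lemma2.2} doing the work for the upper bound. The ball radius is chosen so that it matches the distance $r_{\bm{x}}$, and we write $\rho_s(r)$ for $e^{-\pi r^2/s^2}$, the value of $\rho_s$ on any vector of norm $r$.

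\emph{Lower bound.} Pick $\bm{v}_0\in L$ with $\|\bm{x}-\bm{v}_0\| = r_{\bm{x}}$. Since $L=-L$, the vector $\bm{x}-\bm{v}_0$ lies in $L+\bm{x}$. Keeping only this term of the sum gives $\rho_s(L+\bm{x})\ge \rho_s(r_{\bm{x}})$. Because $0<\rho_s(\bm{x})\le 1$, dividing by $\rho_s(\bm{x})$ can only increase the left side, so $\rho_s(L+\bm{x})/\rho_s(\bm{x})\ge \rho_s(r_{\bm{x}})$.

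\emph{Upper bound.} Every $\bm{u}\in L+\bm{x}$ satisfies $\|\bm{u}\|\ge r_{\bm{x}}$. Set $c=r_{\bm{x}}/\tau$, which is at least $1$ by hypothesis. Then $cs\sqrt{n/(2\pi)}=r_{\bm{x}}$, so the whole coset $L+\bm{x}$ lies outside the open ball $\mathcal{B}(0,r_{\bm{x}})$. Lemma~\ref{lemma2.2} with this $c$ and $\bm{t}=\bm{x}$ yields
\[
\rho_s(L+\bm{x})\le \Big(\tfrac{r_{\bm{x}}}{\tau}\Big)^n e^{\frac n2} e^{-\frac{c^2 n}{2}}\rho_s(L).
\]
Since $\frac{c^2n}{2}=\frac{r_{\bm{x}}^2}{\tau^2}\cdot\frac n2=\frac{\pi r_{\bm{x}}^2}{s^2}$, we have $e^{-c^2n/2}=\rho_s(r_{\bm{x}})$. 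This is exactly the claimed right-hand side, except that the normalizer is $\rho_s(L)$ rather than $\rho_s(\bm{x})$.

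\emph{Main obstacle: the normalizer.} To reach the statement as worded, one still has to pass from $\rho_s(L)$ to $\rho_s(\bm{x})$. The bound above gives the required inequality only if $\rho_s(L)\le\rho_s(\bm{x})$. Since $\rho_s(L)\ge 1\ge\rho_s(\bm{x})$, this forces $\rho_s(L)=\rho_s(\bm{x})=1$, i.e.\ $\bm{x}=\bm 0$ and $L=\{\bm 0\}$. The latter is impossible for an $n$-dimensional lattice, and anyway $\bm{x}=\bm 0$ would contradict $r_{\bm{x}}\ge\tau>0$. So this route cannot close the gap. I do not see a way around it: the stated upper bound appears false in general. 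Take $\bm{x}=\bm{v}+\bm{x}_0$ with $\bm{v}\in L$ long. Then $L+\bm{x}=L+\bm{x}_0$ and $r_{\bm{x}}=r_{\bm{x}_0}$ are unchanged, while $\rho_s(\bm{x})\to 0$, so the middle quotient is unbounded and the right-hand side stays fixed. The honest outcome of this plan is therefore the lower bound exactly as stated, together with the upper bound normalized by $\rho_s(L)$. That normalized form is presumably the intended reading, since it is exactly what the proof delivers.

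\emph{Monotonicity.} Put $f(r)=r^n e^{-\pi r^2/s^2}$, so the right-hand side is $\tau^{-n}e^{n/2}f(r_{\bm{x}})$. Then
\[
\frac{d}{dr}\log f(r)=\frac nr-\frac{2\pi r}{s^2},
\]
which is $\le 0$ exactly when $r^2\ge \frac{ns^2}{2\pi}=\tau^2$. Hence the right-hand side is non-increasing on $[\tau,\infty)$, and strictly decreasing for $r_{\bm{x}}>\tau$.
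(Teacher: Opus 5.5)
You correctly spotted that the normalizer in the statement is a typo. It should be $\rho_s(L)$, not $\rho_s(\bm{x})$; that is what the paper's proof establishes and what Corollary~\ref{corollary_optimized_tight_bound} uses. Your counterexample $\bm{x}=\bm{v}+\bm{x}_0$ validly shows that the literal upper bound fails. Your upper bound, via Lemma~\ref{lemma2.2} with $c=r_{\bm{x}}/\tau$ and $\bm{t}=\bm{x}$, is the paper's argument, and your monotonicity computation is equivalent to it.

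\textbf{The gap is in the lower bound.} Once you adopt $\rho_s(L)$ as the intended normalizer, the lower bound must also be read with it. You instead prove it only for $\rho_s(\bm{x})$, so you end with a hybrid that matches neither reading. Under the intended reading you need
\[
\rho_s(L+\bm{x})\ge\rho_s(L)\,\rho_s(r_{\bm{x}}).
\]
Keeping a single term of the sum gives only $\rho_s(L+\bm{x})\ge\rho_s(r_{\bm{x}})$. This is weaker by the factor $\rho_s(L)\ge 1$, which in general is strictly larger than $1$ and can be large.

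\textbf{The missing idea} is the paper's symmetrization, in which every lattice point contributes rather than just the closest one. Take $\bm{y}\in L+\bm{x}$ with $\|\bm{y}\|=r_{\bm{x}}$, so that $L+\bm{x}=L+\bm{y}$. Since $L=-L$, you can average the terms for $\bm{v}$ and $-\bm{v}$ and expand $\|\bm{v}\pm\bm{y}\|^2$:
\[
\rho_s(L+\bm{x})=\frac12\sum_{\bm{v}\in L}\bigl(\rho_s(\bm{v}+\bm{y})+\rho_s(\bm{v}-\bm{y})\bigr)=\rho_s(\bm{y})\sum_{\bm{v}\in L}\rho_s(\bm{v})\cosh\Bigl(\frac{2\pi\langle\bm{v},\bm{y}\rangle}{s^2}\Bigr)\ge\rho_s(L)\,\rho_s(r_{\bm{x}}),
\]
because $\cosh\ge 1$. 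This is what puts the same normalizer $\rho_s(L)$ on both sides of the distinguishing inequality.
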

\begin{proof}
	Let $\bm{y}\in L+\bm{x}$ that has minimum norm, then $\|\bm{y}\| = r_{\bm{x}}$. 
	
	The left hand side is a well-known fact (\cite{banaszczyk1993new,aharonov2005lattice}) whose proof is straightforward:
	using the fact that $L$ is symmetric about the origin, we  have
	\begin{align*}
		\rho_{s}(L&+\bm{x}) = \frac{1}{2}\sum_{\bm{v}\in L}\left( e^{-\frac{\pi}{s^2}\| \bm{v}+\bm{y} \|^2} + e^{-\frac{\pi}{s^2} \| \bm{v} - \bm{y} \|^2} \right)\\
		&= \frac{1}{2}\sum_{\bm{v}\in L} e^{-\frac{\pi}{s^2}\| \bm{v} \|^2}e^{-\frac{\pi}{s^2}\| \bm{y} \|^2}\left( e^{-\frac{2\pi}{s^2}\langle \bm{v},\bm{y} \rangle} + e^{\frac{2\pi}{s^2}\langle \bm{v},\bm{y} \rangle}  \right)\geq \rho_s(L)\rho_s(r_{\bm{x}}).
	\end{align*}

For the right hand side, write $r_{\bm{x}} = {c}{s}\sqrt{\frac{n}{2\pi}}$. Then $r_{\bm{x}} \geq \tau$ is equivalent to $c \geq 1$. Noting that $c = \frac{r_{\bm{x}}\sqrt{2\pi}}{s\sqrt{n}} = \frac{r_{\bm{x}}}{\tau}$, by Lemma \ref{lemma2.2} we have
	\begin{align*}
		\frac{\rho_s(L+\bm{x})}{\rho_s(L)} \leq \left( c\sqrt{e}\cdot e^{-\frac{c^2}{2}} \right)^n
		= \left( \frac{r_{\bm{x}}\sqrt{e}}{\tau}\cdot e^{-\frac{\pi r_{\bm{x}}^2}{s^2n}} \right)^n
		= \left( \frac{r_{\bm{x}}}{\tau} \right)^n e^{\frac{n}{2}}\rho_{{s}}(r_{\bm{x}}).
	\end{align*}
	For $c > 1$, the expression $c\sqrt{ e}\cdot e^{-\frac{c^2}{2}}$ is monotonically decreasing in $c$, while $r_{\bm{x}}$ increases with  $c$. Consequently, the composite term $\left( \frac{r_{\bm{x}}}{\tau} \right)^n e^{\frac{n}{2}}\rho_{{s}}(r_{\bm{x}})$ is monotonically decreasing in $r_{\bm{x}}$. 
	\qed
\end{proof}

It is natural to take  the length of the shortest non-zero vector in our consideration. We impose an additional condition that $\lambda_1(L)$ is suitably large and obtain the following improved inequality:
\begin{lemma}[Improved Banaszczyk Inequality]\label{lemma3.1}
	Let $L\subset\mathbb{R}^n$ be a lattice. Assume that $\lambda_1(L) \geq kcs\sqrt{\frac{n}{2\pi}}$ where $k > 1$. Then for any $c\geq 1$ and $\bm{t}\in\mathbb{R}^n$,
	\begin{equation*}
		{\rho_s\left((L+\bm{t})\setminus \mathcal{B}\left(\bm{0},cs\sqrt{\frac{n}{2\pi}}\right)\right)} \leq \frac{\left(  e^{1 - c^2} \right)^{\frac{n}{2}}}{1 - \epsilon},
	\end{equation*}
	where $\epsilon = \left( k\sqrt{e}\cdot e^{-\frac{k^2}{2}} \right)^n$.
\end{lemma}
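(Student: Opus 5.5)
The plan is to rerun the first half of the proof of Lemma~\ref{lemma2.2} unchanged. The only difference is the last step: instead of converting $\rho_{sc}(L)$ back to $\rho_s(L)$ via Poisson summation, we bound $\rho_{sc}(L)$ by an absolute constant close to $1$. The hypothesis on $\lambda_1(L)$ is what makes this possible.

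\textbf{Step 1 (splitting the exponent).} For $\bm{u}\in L+\bm{t}$ with $\|\bm{u}\|\geq cs\sqrt{n/(2\pi)}$, write $e^{-\pi\|\bm{u}\|^2/s^2}=e^{-\frac{\pi}{s^2}(1-\frac{1}{c^2})\|\bm{u}\|^2}\cdot e^{-\frac{\pi}{s^2c^2}\|\bm{u}\|^2}$. Bound the first factor by $e^{-\frac{c^2-1}{2}n}$, exactly as in the proof of Lemma~\ref{lemma2.2}. This gives
\[
\rho_s\Big((L+\bm{t})\setminus\mathcal{B}\big(\bm{0},cs\sqrt{\tfrac{n}{2\pi}}\big)\Big)\leq e^{-\frac{c^2-1}{2}n}\,\rho_{sc}(L+\bm{t}) .
\]

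\textbf{Step 2 (removing the shift).} Apply Lemma~\ref{poisson_summation_formula}(2) and then (1). Since all the coefficients $\rho_{1/(sc)}(\bm{w})$ are positive, replacing each $e^{2\pi i\langle\bm{t},\bm{w}\rangle}$ by $1$ yields $\rho_{sc}(L+\bm{t})\leq\rho_{sc}(L)$.

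\textbf{Step 3 (bounding $\rho_{sc}(L)$), the key new step.} Write $\rho_{sc}(L)=1+\rho_{sc}(L\setminus\{\bm{0}\})$. The hypothesis $\lambda_1(L)\geq k\cdot(cs)\sqrt{n/(2\pi)}$ says that every nonzero lattice vector lies outside $\mathcal{B}\big(\bm{0},k(sc)\sqrt{n/(2\pi)}\big)$. Hence
\[
L\setminus\{\bm{0}\}=L\setminus\mathcal{B}\big(\bm{0},k(sc)\sqrt{\tfrac{n}{2\pi}}\big).
\]
Now apply Lemma~\ref{lemma2.2} to this set with Gaussian parameter $sc$ in place of $s$, constant $k>1$ in place of $c$, and $\bm{t}=\bm{0}$. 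This gives $\rho_{sc}(L\setminus\{\bm{0}\})\leq\epsilon\,\rho_{sc}(L)$ with $\epsilon=(k\sqrt{e}\,e^{-k^2/2})^n$. Since $k>1$, we have $\epsilon<1$, because $x\sqrt{e}\,e^{-x^2/2}$ equals $1$ at $x=1$ and is decreasing for $x>1$. Therefore $\rho_{sc}(L)\leq 1+\epsilon\rho_{sc}(L)$, i.e.\ $\rho_{sc}(L)\leq\frac{1}{1-\epsilon}$.

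\textbf{Step 4 (combining).} Putting the three steps together gives the bound $e^{-\frac{c^2-1}{2}n}/(1-\epsilon)=(e^{1-c^2})^{n/2}/(1-\epsilon)$.

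The main point to get right is Step 3. The self-referential use of Lemma~\ref{lemma2.2} at the scaled parameter $sc$ must match the hypothesis on $\lambda_1$ exactly, which is why the assumption carries the factor $kc$ rather than $k$. Everything else is routine.
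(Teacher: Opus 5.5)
Your proposal is correct and matches the paper's proof step for step. Both split the exponent, bound $\rho_{sc}(L+\bm{t})\le\rho_{sc}(L)$, and apply Lemma~\ref{lemma2.2} with parameter $sc$, constant $k$ and $\bm{t}=\bm{0}$ to get $\rho_{sc}(L)\le 1/(1-\epsilon)$; you also spell out two points the paper leaves implicit, namely the Poisson-summation justification for dropping the shift and why $\epsilon<1$.
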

\begin{proof}
	We have
	\begin{align*}
		\rho_s\bigg((L&+\bm{t})\setminus \mathcal{B}\left(\bm{0},cs\sqrt{\frac{n}{2\pi}}\right)\bigg) = \sum_{\substack{\bm{v}\in L+\bm{t}\\ \|\bm{v}\| \ge cs\sqrt{\frac{n}{2\pi}}}}e^{-\frac{\pi}{s^2}\left( 1- \frac{1}{c^2} \right)\|\bm{v}\|^2}\cdot e^{-\frac{\pi}{s^2c^2}\|\bm{v}\|^2}\\
		&\leq e^{-\frac{\pi}{s^2}\left( 1- \frac{1}{c^2} \right)\frac{c^2s^2n}{2\pi}}\cdot \rho_{sc}(L+\bm{t})\leq \left(  e^{1 - c^2} \right)^{\frac{n}{2}}\cdot \rho_{sc}(L).
	\end{align*}
	Since $\lambda_1(L) \geq kcs\sqrt{\frac{n}{2\pi}}$, using Lemma \ref{lemma2.2} we obtain
	\begin{align*}
		\frac{\rho_{sc}(L) - 1 }{\rho_{sc}(L)} = \frac{\rho_{sc}(L\backslash\mathcal{B}(\bm{0}, kcs\sqrt{\frac{n}{2\pi}}))}{\rho_{sc}(L)} \leq \left( k\sqrt{e}\cdot e^{-\frac{k^2}{2}} \right)^n.
	\end{align*}
	Let $\epsilon =\left( k\sqrt{e}\cdot e^{-\frac{k^2}{2}} \right)^n$. For $k>1$, $\epsilon < 1$. Then $\rho_{sc}(L) \leq \frac{1}{1-\epsilon}$.Thus we have
	\begin{equation*}
		\rho_s\left((L+\bm{t})\setminus \mathcal{B}\left(\bm{0},cs\sqrt{\frac{n}{2\pi}}\right)\right) \leq \left(  e^{1 - c^2} \right)^{\frac{n}{2}}\cdot \rho_{sc}(L) \leq \frac{\left(  e^{1 - c^2} \right)^{\frac{n}{2}}}{1 - \epsilon}.
	\end{equation*}\qed
\end{proof}
Since $\rho_s(L) > 1$ for any $s>0$, the following is immediate by Lemma~\ref{lemma3.1}:
\begin{corollary}\label{corollary_optimized_tight_bound}
	Let $L\subset \mathbb{R}^n$ and $s>0$. Let $\bm{x}\in \mathbb{R}^n$ and $r_{\bm{x}} = \mbox{dist}(\bm{x},L)$.  Assume $r_{\bm{x}} \geq {cs}\sqrt{\frac{n}{2\pi}}$ and $\lambda_1(L)  \geq {kcs}\sqrt{\frac{n}{2\pi}}$ for $c\geq 1$ and $k>1$, then
	\begin{equation*}
		\rho_{{s}}(r_{\bm{x}}) \leq \frac{\rho_s(L+\bm{x})}{\rho_s(L)} \leq \frac{\left( e^{1-c^2} \right)^{\frac{n}{2}}}{1-\epsilon},
	\end{equation*}
	where $\epsilon = \left( k\sqrt{e}\cdot e^{-\frac{k^2}{2}} \right)^n$.
\end{corollary}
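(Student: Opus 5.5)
The corollary has two inequalities, and I would prove each from results already in the paper.

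\emph{Lower bound.} Reuse the symmetrization argument from the proof of Corollary~\ref{corollary_more_tight_bound}. Pick $\bm{y}\in L+\bm{x}$ with $\|\bm{y}\|=r_{\bm{x}}$. Since $L=-L$, I would write
\[
\rho_s(L+\bm{x})=\tfrac12\sum_{\bm{v}\in L}\left(\rho_s(\bm{v}+\bm{y})+\rho_s(\bm{v}-\bm{y})\right).
\]
Expanding the squares gives the factor $\rho_s(\bm{v})\rho_s(\bm{y})\cosh\!\left(\frac{2\pi}{s^2}\langle\bm{v},\bm{y}\rangle\right)$. Bounding $\cosh\geq 1$ yields $\rho_s(L+\bm{x})\geq \rho_s(L)\rho_s(r_{\bm{x}})$. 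This step uses neither the hypothesis on $r_{\bm{x}}$ nor the one on $\lambda_1(L)$.

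\emph{Upper bound.} The hypothesis $r_{\bm{x}}\geq cs\sqrt{n/(2\pi)}$ means every point of $L+\bm{x}$ has norm at least $cs\sqrt{n/(2\pi)}$. Hence
\[
(L+\bm{x})\setminus\mathcal{B}\left(\bm{0},cs\sqrt{\tfrac{n}{2\pi}}\right)=L+\bm{x},
\]
since the ball is open and points exactly on its boundary are kept. Applying Lemma~\ref{lemma3.1} with $\bm{t}=\bm{x}$, using $\lambda_1(L)\geq kcs\sqrt{n/(2\pi)}$, gives
\[
\rho_s(L+\bm{x})\leq \frac{(e^{1-c^2})^{n/2}}{1-\epsilon}.
\]
Lemma~\ref{lemma3.1} bounds the numerator alone, not a ratio. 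So the last step is to divide by $\rho_s(L)$ and use $\rho_s(L)>1$: the sum contains the term $\rho_s(\bm{0})=1$, and every other term is positive, assuming $L\neq\{\bm{0}\}$. Dividing can therefore only decrease the right-hand side, which gives the stated upper bound.

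\emph{Main obstacle.} The only delicate point is this last division. One must check that the quantity in Lemma~\ref{lemma3.1} is an absolute bound, so dividing by $\rho_s(L)\geq 1$ is legitimate and loses nothing in the direction we need. The rest is direct substitution.
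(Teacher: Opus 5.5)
Your proposal is correct and follows the paper's route. The lower bound is the symmetrization argument from the proof of Corollary~\ref{corollary_more_tight_bound}; reusing that argument rather than the corollary's displayed statement is wise, since its denominator is misprinted as $\rho_s(\bm{x})$. The upper bound is Lemma~\ref{lemma3.1} with $\bm{t}=\bm{x}$ (the open ball removes no point of $L+\bm{x}$), followed by division by $\rho_s(L)\geq 1$, which is exactly the ``immediate by Lemma~\ref{lemma3.1}'' step the paper relies on.
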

%\begin{proof}
%	The left inequality follows directly from Corollary \ref{corollary_more_tight_bound}. Since $r_{\bm{x}} \geq cr$ where $c\geq 1$, we have $\rho_s(L+\bm{x}) = \rho_s\left(L+\bm{x}\backslash \mathcal{B}\left(\bm{0},cs\sqrt{\frac{n}{2\pi}}\right)\right)$. By Lemma \ref{lemma3.1} we obtain
%\begin{align*}
%       \frac{\rho_s(L+\bm{x})}{\rho_s(L)} &= \frac{\rho_s\left(L+\bm{x}\backslash \mathcal{B}\left(\bm{0},cs\sqrt{\frac{n}{2\pi}}\right)\right)}{\rho_s(L)} \leq \frac{\left(  e^{1-c^2} \right)^{\frac{n}{2}}}{1-\epsilon}.
%   \end{align*}
%   \qed
%\end{proof}
In a typical application, $n\ge 500$ (e.g., \cite{pouly2024provable,provablelwe}). The distinguishing inequality can be improved to
\begin{corollary}[Improved distinguishing inequality]\label{corollary_optimized_tight_bound_specific}
Let $L\subset \mathbb{R}^n$ and $s>0$. Let $\bm{x}\in \mathbb{R}^n$ and $r_{\bm{x}} = \mbox{dist}(\bm{x},L)$. Assume $r_{\bm{x}} \geq {cs}\sqrt{\frac{n}{2\pi}}$ and $\lambda_1(L) \geq {kcs}\sqrt{\frac{n}{2\pi}}$ where $c\geq 1$ and $k\geq 1.04$. Assume $n\geq 500$, then
\begin{equation*}
	\rho_{{s}}(r_{\bm{x}}) \leq \frac{\rho_s(L+\bm{x})}{\rho_s(L)} \leq 2\left(e^{1-c^2}\right)^{\frac{n}{2}}.
\end{equation*}
\end{corollary}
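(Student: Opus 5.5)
The plan is to derive this directly from Corollary~\ref{corollary_optimized_tight_bound}. The only extra work is to show that, under $k\ge 1.04$ and $n\ge 500$, the quantity $\epsilon=\left(k\sqrt{e}\,e^{-k^2/2}\right)^n$ is at most $1/2$. Then $\frac{1}{1-\epsilon}\le 2$ and the stated bound follows.

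The left inequality is exactly the left inequality of Corollary~\ref{corollary_optimized_tight_bound}, and is unchanged. For the right inequality, Corollary~\ref{corollary_optimized_tight_bound} applies because $k\ge 1.04>1$ and $c\ge 1$. It gives
\[
\frac{\rho_s(L+\bm{x})}{\rho_s(L)}\le \frac{\left(e^{1-c^2}\right)^{n/2}}{1-\epsilon}.
\]
So it suffices to show $\epsilon\le 1/2$.

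To bound $\epsilon$, set $g(k)=k\sqrt{e}\,e^{-k^2/2}$. We have $\ln g(k)=\ln k+\tfrac12-\tfrac{k^2}{2}$, with derivative $\frac1k-k<0$ for $k>1$. Hence $g$ is strictly decreasing on $(1,\infty)$, and $g(k)\le g(1.04)<1$ for $k\ge 1.04$. Since $0<g(k)<1$, the power $g(k)^n$ is decreasing in $n$, so
\[
\epsilon\le g(1.04)^{500}.
\]

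This numerical estimate is the main (but mild) obstacle. Compute $\ln g(1.04)=\ln 1.04+\tfrac12-\tfrac{1.0816}{2}$. Using $\ln 1.04\approx 0.039221$, this is $0.039221+0.5-0.5408=-0.001579$. Multiplying by $500$ gives roughly $-0.79$, so $\epsilon\le e^{-0.79}\approx 0.45<\tfrac12$.

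To make this rigorous, use the upper bound $\ln(1+x)\le x-\tfrac{x^2}{2}+\tfrac{x^3}{3}$ at $x=0.04$. It gives $\ln 1.04\le 0.0392214$, hence $500\ln g(1.04)\le -0.78$. Since $e^{-0.78}<0.46<\tfrac12$, we get $\frac{1}{1-\epsilon}<2$, which completes the right inequality.
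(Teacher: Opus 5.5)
Your proof is correct and takes the same route as the paper: apply Corollary~\ref{corollary_optimized_tight_bound} and show $\epsilon<\tfrac12$ when $k\ge 1.04$ and $n\ge 500$, so that $\frac{1}{1-\epsilon}\le 2$. You also supply what the paper only asserts, namely monotonicity of $g(k)^n$ in $k$ and $n$ and a rigorous numerical check $\epsilon\le g(1.04)^{500}\le e^{-0.78}<0.46$.
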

\begin{proof}
When $k\geq 1.04$ and $n\geq 500$, we have
\begin{equation*}
	\left( k\sqrt{e}\cdot e^{-\frac{k^2}{2}} \right)^n < 0.5.
\end{equation*}
Then the result follows from Corollary \ref{corollary_optimized_tight_bound}. \qed
\end{proof}

\begin{remark}
Note that in Corollary \ref{corollary_optimized_tight_bound_specific}, the term $2\left(e^{1-c^2}\right)^{\frac{n}{2}}$ can be equivalently  expressed as $2e^{\frac{n}{2}}e^{\frac{\pi}{s^2}(r_{\bm{x}}^2 - r^2c^2)}\rho_s(r_{\bm{x}})$. Hence, the above improved distinguishing inequality can be written equivalently in the following unified form, where the same factor $\rho_s(r_{\bm{x}})$ appears on both sides:
\begin{equation*}
	\rho_{{s}}(r_{\bm{x}}) \leq \frac{\rho_s(L+\bm{x})}{\rho_s(L)} \leq 2e^{\frac{n}{2}}e^{\frac{\pi}{s^2}(r_{\bm{x}}^2 - r^2c^2)}\rho_s(r_{\bm{x}}).
\end{equation*}
\end{remark}

Observe that
\begin{equation*}
\frac{\left( c\sqrt{ e}\cdot e^{- \frac{c^2}{2}} \right)^n}{2\left( e^{1-c^2} \right)^{\frac{n}{2}}} = \frac{c^n}{2}.
\end{equation*}
Here, the numerator is the upper bound given by Banaszczyk inequality. For $c>1$, we obtain an improvement of Banaszczyk inequality by an exponential factor.

\begin{credits}

%\subsubsection{\ackname} We thank the anonymous reviewers for valuable comments
%and suggestions.	This work is partially supported by the National Natural Science Foundation of China (No. 12271306) .

%\subsubsection{\discintname}
%There are no financial conflicts of interest to disclose.
\end{credits}

%
% ---- Bibliography ----
%
% BibTeX users should specify bibliography style 'splncs04'.
% References will then be sorted and formatted in the correct style.
%
 \bibliographystyle{splncs04}
 \bibliography{mybibliography}

@InProceedings{pouly2024provable,
author="Pouly, Amaury
and Shen, Yixin",
editor="Joye, Marc
and Leander, Gregor",
title="{Provable Dual Attacks on Learning with Errors}",
booktitle="Advances in Cryptology -- EUROCRYPT 2024",
year="2024",
publisher="Springer Nature Switzerland",
address="Cham",
pages="256--285",
isbn="978-3-031-58754-2",
url={https://doi.org/10.1007/978-3-031-58754-2\_10}
}

@article{aharonov2005lattice,
author = {Aharonov, Dorit and Regev, Oded},
title = {{Lattice problems in NP $\cap$ coNP}},
year = {2005},
issue_date = {September 2005},
publisher = {Association for Computing Machinery},
address = {New York, NY, USA},
volume = {52},
number = {5},
issn = {0004-5411},
url = {https://doi.org/10.1145/1089023.1089025},
journal = {J. ACM},
month = sep,
pages = {749-765},
numpages = {17}
}

@book{stein2011fourier,
  title={{Fourier analysis: An Introduction}},
  author={Stein, Elias M and Shakarchi, Rami},
  volume={1},
  year={2003},
  publisher={Princeton University Press}
}

@article{banaszczyk1993new,
  title={{New bounds in some transference theorems in the geometry of numbers}},
  author={Banaszczyk, Wojciech},
  journal={Mathematische Annalen},
  volume={296},
  pages={625--635},
  year={1993},
  publisher={Springer-Verlag},
}

@article{tian2014measure,
  title={{Measure inequalities and the transference theorem in the geometry of numbers}},
  author={Tian, Chengliang and Liu, Mingjie and Xu, Guangwu},
  journal={Proceedings of the American Mathematical Society},
  volume={142},
  number={1},
  pages={47--57},
  year={2014}
}

@book{serre2012course,
	title={{A Course in Arithmetic}},
	author={Serre, Jean-Pierre},
	volume={7},
	year={2012},
	publisher={Springer Science \& Business Media}
}

@misc{provablelwe,
      author = {Hongyuan Qu and Guangwu Xu},
      title = {{On the Provable Dual Attack for {LWE} by Modulus Switching}},
      howpublished = {Cryptology {ePrint} Archive, Paper 2025/859},
      year = {2025},
      url = {https://eprint.iacr.org/2025/859}
}
%
%\begin{thebibliography}{8}
%\bibitem{ref_article1}
%Author, F.: Article title. Journal \textbf{2}(5), 99--110 (2016)

%\bibitem{ref_lncs1}
%Author, F., Author, S.: Title of a proceedings paper. In: Editor,
%F., Editor, S. (eds.) CONFERENCE 2016, LNCS, vol. 9999, pp. 1--13.
%Springer, Heidelberg (2016). \doi{10.10007/1234567890}

%\bibitem{ref_book1}
%Author, F., Author, S., Author, T.: Book title. 2nd edn. Publisher,
%Location (1999)

%\bibitem{ref_proc1}
%Author, A.-B.: Contribution title. In: 9th International Proceedings
%on Proceedings, pp. 1--2. Publisher, Location (2010)

%\bibitem{ref_url1}
%LNCS Homepage, \url{http://www.springer.com/lncs}, last accessed 2023/10/25
%\end{thebibliography}

\begin{appendix}
\end{appendix}

\end{document}